\renewcommand{\k}{{\sf K}}
\newcommand*{\nat}{\ensuremath{\mathbb{N}}}
\renewcommand{\M}{\mathcal{M}}
\newtheorem{theorem}{Theorem}
\newtheorem{lemma}{Lemma}
\title{Closing a Gap in the Complexity of Refinement Modal Logic}
\titlerunning{Closing a Gap in the Complexity of Refinement Modal Logic}
\author{Antonis Achilleos\inst{1} \and Michael Lampis\inst{2}\thanks{This research was partially supported by the Scientific Grant-in-Aid   from  Ministry of Education, Culture, Sports, Science and Technology of Japan.
}}
\institute{ Graduate Center, City University of New York\\ \email{aachilleos@gc.cuny.edu} 
	\and
	Research Institute for Mathematical Sciences (RIMS), Kyoto University\\ \email{mlampis@kurims.kyoto-u.ac.jp}
}
\authorrunning{Antonis Achilleos and Michael Lampis}
\begin{document}
\maketitle

\begin{abstract}

Refinement Modal Logic (RML), which was recently introduced by Bozzelli et al.,
is an extension of classical modal logic which allows one to reason about a
changing model. In this paper we study computational complexity questions
related to this logic, settling a number of open problems. Specifically, we 
study the complexity of satisfiability for the existential fragment of RML, a
problem posed by Bozzelli, van Ditmarsch and Pinchinat. Our main result is a
tight {\PSPACE} upper bound for this problem, which we achieve by introducing a
new tableau system. As a direct consequence, we obtain a tight characterization
of the complexity of RML satisfiability for any fixed number of alternations of
refinement quantifiers. Additionally, through a simple reduction we establish
that the model checking problem for RML is {\PSPACE}-complete, even for
formulas with a single quantifier.

\end{abstract}

\textbf{Keywords:} Modal Logic, Bisimulation, Computational Complexity, PSPACE

\section{Introduction}

In this paper we deal with the computational complexity of the satisfiability
and model checking problems for the recently proposed Refinement Modal Logic
\cite{bozzelli2012}, an extension of the standard modal logic \k\  which also
adds ``refinement'' quantification. Specifically, we first settle the
complexity of satisfiability for the existential fragment of this logic,
resolving an open problem posed by Bozzelli, van Ditmarsch and Pinchinat
\cite{bozzelli2012b}. We then use this result to give tight characterizations
for the complexity of this problem for any fixed number of quantifier
alternations, closing the complexity gaps of \cite{bozzelli2012b}. In addition,
we use a simple argument to show that, unlike the standard modal logic \k,
model checking for RML is {\PSPACE}-hard even for formulas with a single
quantifier.

Modal logics with propositional quantifiers have been a subject of research
since Fine's seminal work \cite{fine1970}. The motivation behind the study of these
logics is their ability to model situations where new information is revealed
leading to a change in the model. In this context several bisimulation
quantified logics have been introduced recently
\cite{visser1996,hollenberg1998,pinchinat2007,french2006}. Roughly speaking,
they allow one to speak about models which are bisimilar to the current model,
except for a single (quantified-upon) propositional variable.  Thus, this
models the arrival of one kind of new information, namely new propositional
information.

The motivation behind Refinement Modal Logic on the other hand is to model the
arrival of a different kind of new information. We would like to have the
ability to speak about models which retain the propositional valuations of the
current model, but possibly differ from it in their structure. In other words,
we would like to be able to reason about new information which affects the
transitions, rather than the propositional valuations, of the current model.

A novel attempt to deal with this problem was offered by Balbiani et al.
\cite{balbiani2008knowable} who defined Arbitrary Public Announcement Logic
(APAL). Roughly speaking, their logic allows one to reason about models which
are restrictions of the current model. Unfortunately, this logic is undecidable
\cite{french2008}. A slightly more general (and arguably more natural) approach was then
undertaken by Bozzelli et al. \cite{van2009,van2010,bozzelli2012} with the introduction of
Refinement Modal Logic (RML). 

In RML one adds to the standard modal logic the quantifiers $\exists_r$ and
$\forall_r$. Informally, $\exists_r \phi$ means ``there exists a refinement of
the current model where $\phi$ holds''. This is more general than speaking
about restrictions, because a model $\M$ is a refinement of another model $\M'$
if and only if $\M$ is a restriction of \emph{a bisimilar copy of} $\M'$
\cite{bozzelli2012}. Another way to define refinement is as a partial
bisimulation: $\M$ is a bisimulation of $\M'$ if there exists a relation
between their states satisfying the \emph{atom}, \emph{forth} and \emph{back}
requirements. For a refinement, only the \emph{atom} and \emph{back} relations
need be satisfied. Informally, propositional valuations are retained and all
transitions of the refined model must map back to transitions of the original
model. As illustrated in \cite{bozzelli2012,bozzelli2012b} RML has already
found many applications, for example, in describing game strategies, program
logic or spatial reasoning.

The main purpose of this paper is to investigate the complexity of the
satisfiability problem for (single-agent) RML, continuing the work of Bozzelli
et al. \cite{bozzelli2012b}. An upper bound of 2\EXP\ was claimed for this
problem in \cite{van2010}, but the tableau-based procedure given there was
found to be sound but not complete \cite{bozzelli2012}. For the general case of
RML the problem was later settled in \cite{bozzelli2012b}, which showed that
the satisfiability problem is complete for the class A\EXP$_{poly}$, which is
the class of problems decidable by alternating exponential-time Turing machines
with a polynomial number of alternations. For any fixed quantifier-alternation
depth $k$, the same paper established that the problem is hard for the $k$-th
level of the exponential-time hierarchy and belongs in the $(k+1)$-th level.

Thus, this left a gap in the complexity of the problem for a fixed number of
alternations. This gap is mostly of theoretical interest for $k>0$, since (as
far as is currently known) there is no immediate practical difference in the
complexity of problems in the $k$-th and $(k+1)$-th level of the
exponential-time hierarchy: they all likely require doubly-exponential time.
This is not true though for the case of RML$^{\exists_r}$, which is the
fragment of RML using only existential quantifiers. The algorithm of
\cite{bozzelli2012} establishes membership of this problem in \NEXP\ (that is,
the first level of the exponential-time hierarchy), but the only hardness
result known is that following from the \PSPACE-hardness of ordinary modal
logic.

In this paper we settle this problem by giving a \PSPACE\ algorithm for
RML$^{\exists_r}$ satisfiability. In practical terms, this means that
satisfiability for this fragment is exponentially easier than for fragments
allowing universal quantification (unless \PSPACE=\NEXP). Somewhat
surprisingly, it also means that RML$^{\exists_r}$ satisfiability is no harder
than satisfiability for the standard modal logic \k, despite the large
expressive power given by adding existential refinement quantifiers. We also
show how our technique can be extended to close the complexity gaps of
\cite{bozzelli2012} for any fixed number of quantifier alternations, thus
completely settling the complexity of this problem. Our main ingredient is to
introduce a new tableau system for RML$^{\exists_r}$. We believe this
construction, which is arguably simpler than the constraint systems used in
\cite{bozzelli2012b}, may be of independent interest.

Our results above seem to suggest that adding existential refinement
quantifiers to \k\ does not cost much in terms of computational complexity.
Unfortunately, we complement this view with a negative result by showing that
the model checking problem is \PSPACE-hard for RML even for formulas with a
single quantifier, resolving an open problem from \cite{bozzelli2012}. The
reduction we propose is surprisingly simple but it establishes that in this
case the power afforded by a refinement quantifier causes an exponential
blow-up in the problem's complexity. 

The rest of this paper is structured as follows: in section \ref{sec:defs} we
give some preliminary definitions; in section \ref{sec:tableaux} we present the
tableau system for RML$^{\exists_r}$; in section \ref{sec:alg} we give a
\PSPACE\ algorithm for RML$^{\exists_r}$ using the tableau method and describe
how this idea can close the complexity gap for any fixed number of quantifier
alternations; finally in section \ref{sec:model} we present the complexity
results for the model checking problem.

\section{Preliminaries} \label{sec:defs}

We use standard terminology for modal logic and the definitions of RML given in
\cite{bozzelli2012,bozzelli2012b}. We assume an infinite supply of atomic
propositional variables. A Kripke structure is a triple $\M = (W,R,V)$, where
$W$ is the set of states (or worlds), $R\subseteq W\times W$ is a transition
relation and $V$ is a function which assigns to each state in $W$ a set of
propositional variables. To ease notation, when $(s,t)\in R$ we sometimes write
$sRt$. For two Kripke models $\M = (W,R,V)$ and $\M' = (W',R',V')$ we say that
$\M'$ is a refinement of $\M$ if there exists a non-empty relation $\mathcal{R}
\subseteq W\times W'$ such that:

\begin{itemize} 
	
	\item For all $(s,s')\in \mathcal{R}$ we have $V(s)=V'(s')$.
	
	\item For all $s\in W$, $s',t'\in W'$ such that $(s,s')\in\mathcal{R}$ and
	$s'R't'$ there exists $t\in S$ such that $(t,t')\in \mathcal{R}$ and $sRt$.
	
\end{itemize}

We call $\mathcal{R}$ a \emph{refinement} mapping from $\M$ to $\M'$. If the
inverse of $\mathcal{R}$ is also a refinement mapping from $\M'$ to $\M$ then
$\M$ and $\M'$ are bisimilar and $\mathcal{R}$ is a bisimulation. If
$\mathcal{R}$ is a refinement mapping from $\M$ to $\M'$ and for $s\in W$,
$s'\in W'$ we have $(s,s')\in\mathcal{R}$ we say that $(\M',s')$ is a
refinement of $(\M,s)$. 

The syntax of RML formulas (defined in \cite{bozzelli2012}) is as follows:

$$ \phi ::= p\ |\ \neg p\ |\ \phi \land \phi\ |\ \phi \lor \phi\ |\ \Diamond
\phi\ |\ \Box \phi\ |\ \exists_r \phi\ |\ \forall_r \phi$$

For most of the paper, unless otherwise noted, we will concentrate on the
existential fragment of RML, denoted RML$^{\exists_r}$, in which the
$\forall_r$ quantifier is not allowed. Observe that negations are only allowed
to appear immediately before propositional variables, making the distinction
between the existential and universal fragments meaningful. A formula that is
$p$ or $\neg p$ for some propositional variable $p$ is called a \emph{literal};
we usually denote literals by $l$. We use $d_\Diamond$ to denote the nesting
depth of modalitites $\Diamond,\Box$ and $d_\exists$ to denote the nesting
depth of $\exists_r$.

The semantics of the propositional and modal parts are as usual, and we write
$\M,s\models \phi$ if a formula $\phi$ holds in a state $s$ of a model $\M$.
For a formula $\phi=\exists_r \psi$ we have $\M,s\models \phi$ exactly if there
exists a model $\M'$ and a state $s'$ in that model such that $\M',s'\models
\psi$ and $(\M',s')$ is a refinement of $(\M,s)$. The semantics of $\forall_r$
are analogously defined; more details can be found in \cite{bozzelli2012}.

\section{Tableaux} \label{sec:tableaux}

In this section we introduce a tableau system that can be used to decide the
satisfiability of RML$^{\exists_r}$ formulas. In contrast to the constraint
systems introduced in \cite{bozzelli2012b}, we try to give a set of rules that
is as simple and as similar as possible to the standard tableau procedures
commonly used in ordinary modal logic \cite{d1999handbook}. 

Recall that, informally, the idea of a tableau is to maintain a set of prefix
and formula pairs. The prefixes, which are created by the tableau, represent
states of an attempted model of the initial formula. Usually, the prefixes are
strings of integers and the accessibility relation of the model we are building
can be read off the strings.

The main new idea in the tableau system we give here is that prefixes now
become pairs of strings. In the course of trying to satisfy the formula we are
implicitly constructing many models, because formulas of the form $\exists_r
\psi$ need to hold in a new model. At the same time, the new models we
construct must obey the refinement conditions. Therefore, to keep things tidy,
the tableau now produces formulas with two prefixes: the first prefix specifies
a model, with the requirement that the refinement relationship of models should
be readable from the prefixes, while the second, as before, encodes a state.

Let us now be a little more precise. We use elements of $\nat^*$ to encode
prefixes, which we denote by greek letters, such as $\mu,\nu,\lambda,\sigma$.
We use $\mu.\nu$ to denote concatenation. For such a string $\mu$, $|\mu|$ is
its length; elements of $\nat$ have length 1. The tableau contains elements of
the form $(\mu,\sigma)\ \psi$, where $\psi$ is a formula, $\mu$ is a prefix
that encodes the model and $\sigma$ a prefix that encodes the state. We
initialize the tableau with $(1,1)\ \phi$, where $\phi$ is the formula whose
satisfiability we want to decide. We now start applying some derivation rules,
which may need to create new states (in the case of $\Diamond$) or new models
(in the case of $\exists_r$). If $i\in\nat$ we use the convention that
$(\mu,\sigma.i)$ represents a state that is a successor of that represented by
$(\mu,\sigma)$. Similarly, we use the convention that $(\mu.i,\sigma)$
represents a state that is a refinement (in fact a restriction) of
$(\mu,\sigma)$.

Given this groundwork, the tableau rules follow quite naturally and are in fact
very similar to those for the standard modal logic K. Recall that the $\lor$
rule involves a non-deterministic choice and we will refer to a \emph{branch}
of the tableau as the set of elements produced after applying the rules to
$(1,1)\ \phi$, for some set of non-deterministic choices.

The full set of rules is given in Table \ref{tab:tableau}. Note that, perhaps the only somewhat
unusual part is the $\Box$ rule which can be applied to $(\mu,\sigma)\ \Box
\psi$ when $(\mu.\nu,\sigma.i)$ has appeared (as opposed to $(\mu,\sigma.i)$).
This is a clean way of maintaining the refinement relation: if
$(\mu.\nu,\sigma)$ represents a state that is a refinement of that represented
by $(\mu,\sigma)$ and $(\mu.\nu,\sigma.i)$ exists, then $(\mu,\sigma.i)$ must
also implicitly exist, even if it does not yet appear in the tableau.

\begin{table}[hb]

\begin{tabular*}{\textwidth}{@{\extracolsep{\fill} } ccc}
\begin{minipage}{0.20\textwidth}
\[
\inferrule*[right=$\wedge$]{(\mu,\sigma)\ \phi \wedge \psi}{(\mu,\sigma)\ \phi \\\\ (\mu,\sigma)\ \psi} 
\]
\end{minipage}&
\begin{minipage}{0.20\textwidth}
\[
\inferrule*[right=$\vee$]{(\mu,\sigma)\ \phi \vee \psi}{(\mu,\sigma)\ \phi\ \mid (\mu,\sigma)\ \psi} 
\]
\end{minipage}&
\begin{minipage}{0.20\textwidth}
\[
\inferrule*[right=$L$]{(\mu.\nu,\sigma)\ l}{(\mu,\sigma)\ l}\]
\end{minipage}\\
\begin{minipage}{0.20\textwidth}
\[
\inferrule*[right=$\Diamond$]{(\mu,\sigma)\ \Diamond \phi}{(\mu,\sigma.i)\ \phi} 
\]
\end{minipage}&
\begin{minipage}{0.20\textwidth}
\[
\inferrule*[right=$\exists_r$]{(\mu,\sigma)\ \exists_r \phi}{(\mu.m,\sigma)\ \phi} 
\]
\end{minipage}&
\begin{minipage}{0.20\textwidth}
\[
\inferrule*[right=$\Box $]{(\mu,\sigma)\ \Box \phi}{(\mu,\sigma.i)\ \phi}
\]
\end{minipage}\\
\begin{minipage}{0.20\textwidth}
where $\sigma.i$ has not appeared 
\end{minipage}
&
\begin{minipage}{0.20\textwidth} 
where $\mu.m$ has not appeared 
\end{minipage}
&
\begin{minipage}{0.20\textwidth}
where $(\mu.\nu,\sigma.i)$ has already appeared
\end{minipage}
\end{tabular*}

\caption{Summary of tableau rules.} \label{tab:tableau}

\end{table}

\medskip

A branch is complete if all rules have been applied as much as possible. A
complete branch is rejecting if there are some $(\mu,\sigma)\ p$ and
$(\mu,\sigma)\ \neg p$ both in the branch; otherwise it is accepting.  The
correctness of the tableau system is summarized in the following lemmata.

\begin{lemma}\label{lem:complete}
	If $\phi$ is satisfiable then there exists a complete accepting tableau branch starting from $(1,1)\phi$.
\end{lemma}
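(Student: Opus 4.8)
The plan is to prove the statement directly: assume $\phi$ is satisfied at some state $s$ of a model $\M$, and use $(\M,s)$ as a \emph{guide} that both tells us how to resolve every nondeterministic choice in the tableau and certifies that the resulting branch never becomes rejecting. Concretely, I would build the branch together with a \emph{realization}: an assignment mapping each prefix $(\mu,\sigma)$ occurring in the branch to a Kripke model $\M_\mu$ and a state $s_{\mu,\sigma}$ of $\M_\mu$, maintained so that the following invariant holds throughout. First, for every element $(\mu,\sigma)\,\psi$ placed in the branch we have $\M_\mu, s_{\mu,\sigma}\models\psi$. Second, the models form a refinement hierarchy matching the prefixes: $\M_{\mu.m}$ is a refinement of $\M_\mu$ via some relation $\mathcal{R}_{\mu.m}$, and for any $\mu,\nu$ the composed relation $\mathcal{R}_{\mu,\mu.\nu}$ links $s_{\mu,\sigma}$ with $s_{\mu.\nu,\sigma}$ whenever both are defined. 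The base case sets $\M_1=\M$ and $s_{1,1}=s$, so the invariant holds for the initial element $(1,1)\,\phi$.

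The bulk of the argument is to check that each rule can be applied so as to preserve the invariant, using $(\M,s)$ to make the choices. The propositional rules are immediate: $\wedge$ splits a conjunction both of whose conjuncts hold at $s_{\mu,\sigma}$, and at a $\vee$ we take the disjunct that is actually true at $s_{\mu,\sigma}$ (at least one is, since the disjunction holds there). For $\Diamond$, the truth of $\Diamond\phi$ at $s_{\mu,\sigma}$ supplies an $R_\mu$-successor satisfying $\phi$, which we assign to the fresh state $s_{\mu,\sigma.i}$. For $\exists_r$, the semantics of the quantifier hands us a refinement $(\M',s')$ of $(\M_\mu,s_{\mu,\sigma})$ with $\M',s'\models\phi$; we set $\M_{\mu.m}=\M'$, record the refinement relation as $\mathcal{R}_{\mu.m}$, and put $s_{\mu.m,\sigma}=s'$. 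Here I would invoke the (routine, standard) fact that refinement is transitive, i.e.\ the composition of refinement mappings is again a refinement mapping, so that the composed relations $\mathcal{R}_{\mu,\mu.\nu}$ used in the invariant are genuine refinement relations satisfying the \emph{atom} and \emph{back} requirements.

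The delicate rules are $L$ and $\Box$, which is where the \emph{atom} and \emph{back} requirements do the real work. For $L$, the premise $(\mu.\nu,\sigma)\,l$ gives $\M_{\mu.\nu},s_{\mu.\nu,\sigma}\models l$; since $s_{\mu.\nu,\sigma}$ and $s_{\mu,\sigma}$ are related by $\mathcal{R}_{\mu,\mu.\nu}$, the \emph{atom} requirement $V=V'$ transfers the literal, giving $\M_\mu,s_{\mu,\sigma}\models l$ as demanded by the conclusion $(\mu,\sigma)\,l$. For $\Box$, the rule fires on $(\mu,\sigma)\,\Box\phi$ once some $(\mu.\nu,\sigma.i)$ has appeared, and produces $(\mu,\sigma.i)\,\phi$ in the \emph{coarser} model $\mu$. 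To realize this I pull the transition $s_{\mu.\nu,\sigma}\,R_{\mu.\nu}\,s_{\mu.\nu,\sigma.i}$ back through $\mathcal{R}_{\mu,\mu.\nu}$: the \emph{back} requirement yields a state $t$ of $\M_\mu$ with $s_{\mu,\sigma}\,R_\mu\,t$ and $(t,s_{\mu.\nu,\sigma.i})\in\mathcal{R}_{\mu,\mu.\nu}$, and I define $s_{\mu,\sigma.i}=t$. This choice is well-defined because the side condition on $\Diamond$ makes each successor index $i$ globally fresh, so $\sigma.i$ points to a unique generating transition; and since $\Box\phi$ holds at $s_{\mu,\sigma}$ while $t$ is an $R_\mu$-successor, we get $\M_\mu,t\models\phi$, with all box-formulas at $(\mu,\sigma)$ routing through the same $i$ satisfied at the single state $t$.

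Finally I would argue completeness and acceptance of the constructed branch. Every rule produces strict subformulas of $\phi$, so only subformulas of $\phi$ ever appear; the state prefixes grow only by $\Diamond/\Box$ and the model prefixes only by $\exists_r$, so their lengths are bounded by $d_\Diamond$ and $d_\exists$ of $\phi$ respectively. Hence only finitely many pairs $(\mu,\sigma)\,\psi$ can occur, the construction terminates with all applicable rules exhausted, and the branch is complete. It is accepting because the invariant forbids a clash: if both $(\mu,\sigma)\,p$ and $(\mu,\sigma)\,\neg p$ occurred we would have $\M_\mu,s_{\mu,\sigma}\models p$ and $\M_\mu,s_{\mu,\sigma}\models\neg p$, a contradiction. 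The main obstacle I anticipate is precisely the $\Box$ rule together with the bookkeeping of the composed relations $\mathcal{R}_{\mu,\mu.\nu}$: one must verify that these compositions remain refinements and that the pulled-back successors are assigned coherently, so that the second clause of the invariant is never violated.
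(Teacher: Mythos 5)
Your proposal follows essentially the same strategy as the paper's proof: an induction on rule applications that maintains a family of models $\M_\mu$, assigned states $s_{\mu,\sigma}$, and composed refinement relations, with the invariant that every prefixed formula in the branch holds at its assigned state; the rule-by-rule verification (choosing the true disjunct at $\vee$, using the \emph{atom} condition for $L$ and the \emph{back} condition for successors) and the final termination-plus-no-clash argument all match the paper's.

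The one substantive divergence is \emph{where} the pull-back through the refinement relations happens, and it creates a gap. The paper does the pull-back eagerly: when the $\Diamond$ rule creates $s_{\mu,\sigma.i}$, the proof immediately propagates this state backwards through every coarser model $\mu'\sqsubseteq\mu$ via the \emph{back} condition, so that $s_{\mu',\sigma.i}$ exists and is suitably related for all prefixes $\mu'$; as a result the $L$ and $\Box$ cases become trivial, because the state needed in the coarser model is guaranteed to exist already. You instead pull back lazily, only when the $\Box$ rule fires. This leaves your $L$ case under-justified: when $L$ derives $(\mu,\sigma)\ l$ from $(\mu.\nu,\sigma)\ l$, you assert that $s_{\mu,\sigma}$ is defined and $\mathcal{R}_{\mu,\mu.\nu}$-related to $s_{\mu.\nu,\sigma}$, but in your scheme no rule need have defined $s_{\mu,\sigma}$ yet. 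Concretely, for $\phi=\exists_r\Diamond p$, after applying $\exists_r$ and then $\Diamond$ the $L$ rule produces $(1,\sigma.i)\ p$, although no state $s_{1,\sigma.i}$ of $\M_1$ has been assigned, since no $\Box$ has fired in model $1$. The repair is exactly your own $\Box$-case argument (pull the generating transition back along the whole prefix chain from the model where $\sigma.i$ was created), together with the coherence check you only gesture at: $s_{\mu,\sigma.i}$ must be fixed once per pair and shared by all later $L$- and $\Box$-conclusions at $(\mu,\sigma.i)$. This does work --- the \emph{atom} condition transfers every literal to the pulled-back state, and every $\Box$-body at the predecessor holds at any pulled-back successor --- but it needs to be said, and the paper's eager propagation at the $\Diamond$ rule makes the whole issue vanish by construction, which is why its $L$ and $\Box$ cases are one-liners.
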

\begin{proof}
	We will proceed by induction on the number of rules applied.  During the course
	of the proof we maintain a collection of models.  Specifically, suppose that
	after $k \in \nat$ applications of the rules we have a tableau branch $b$. For
	every $\mu$ such that $(\mu,\sigma)\ \psi \in b$ we keep a model $\M_\mu =
	(W_{\mu},R_\mu,V_\mu)$. For each $(\mu,\sigma)$ such that $(\mu.\nu,\sigma)\
	\psi \in b$ for some $\nu\in\nat^*$ we have a state $s_{\mu,\sigma} \in W_\mu$.
	The following conditions are also satisfied:
	
	\begin{enumerate}
		
		\item Whenever $(\mu,\sigma)\ \psi \in b$, $\M_\mu, s_{\mu,\sigma} \models
		\psi$;
		
		\item for all $(\mu.\nu,\sigma)\ \psi, (\mu.\nu',\sigma.i)\ \psi' \in b$ with
		$\nu,\nu'\in\nat^*$, we have $s_{\mu,\sigma} R_\mu s_{\mu,\sigma.i}$;
		
		\item for every $(\mu.i.\nu,\sigma)\ \psi \in b$ with $\nu\in\nat^*$,
		$(\M_{\mu.i},s_{\mu.i,\sigma})$ is a refinement of
		$(\M_{\mu},s_{\mu,\sigma})$. Let $\rho_{\mu,\mu.i}$ be the refinement relation from $\M_{\mu}$ to $\M_{\mu.i}$. 
		For all $\nu\in\nat^*$ we define $\rho_{\mu,\mu.\nu.i}=
		\rho_{\mu,\mu.\nu}\circ \rho_{\mu.\nu,\mu.\nu.i}$ 
		
	\end{enumerate}
	
	The tableau is initialized with $(1,1)\phi$, and suppose that $\phi$ is
	satisfiable, so there exists a model $\M$ and a state $s$ in that model such
	that $\M,s\models \phi$. We set $\M_1=\M$ and $s_{1,1}=s$ and all the
	conditions above are satisfied.
	
	Suppose that after $k$ applications we have a collection of models that satisfy
	the conditions given above. We will show that we can handle applying any of the
	tableau rules, while still maintaining the conditions. This establishes that
	after any number of applications the tableau will be clash-free, since by the
	first condition, any formula contained in the tableau must be true in the
	corresponding state. We now take cases, depending on the rule applied.
The two propositional rules are simple. The $L$ rule is also simple since it
can only be applied to literals, whose values are preserved by refinement.  The
$\Box$ rule also causes no problem: suppose we apply it to $(\mu,\sigma)\
\Box\psi$ to produce $(\mu,\sigma.i)\ \psi$. Then by the conditions we have
that $\M_\mu,s_{\mu,\sigma}\models \Box \psi$ and, since some
$(\mu.\nu,\sigma.i)\ \psi'$ already appears in the tableau, there is a state
$s_{\mu,\sigma.i}\in W_\mu$ and $s_{\mu,\sigma} R_\mu s_{\mu,\sigma.i}$.
Therefore, $\M_\mu,s_{\mu,\sigma.i}\models \psi$ and the remaining conditions
are not affected.
	
	For the $\Diamond$ rule, suppose we apply it to $(\mu,\sigma)\ \Diamond\psi$ to
	get $(\mu,\sigma.i)\ \psi$, such that $\sigma.i$ has not appeared before in the
	branch. We know that $\M_\mu,s_{\mu,\sigma}\models \Diamond \psi$, therefore
	there exists a state in $W_\mu$, call it $s_{\mu,\sigma.i}$, such that
	$\M_\mu,s_{\mu,\sigma.i}\models \psi$, so the first condition remains
	satisfied. By the definition of refinement, if $\mu = m_1.m_2\ldots m_l$, then there must be some state in $W_{\mu'}$, which we call $s_{\mu',\sigma.i}$, such that $s_{\mu',\sigma.i}\rho_{\mu',\mu} s_{\mu,\sigma.i}$.
	Thus, we can continue with this procedure until the second and third conditions are
	satisfied for all prefixes.
	
	For the $\exists_r$ rule, suppose we apply it to $(\mu,\sigma)\ \exists_r \psi$
	to get $(\mu.m,\sigma)\ \psi$, such that $\mu.m$ has not appeared before in the
	branch. We know that $\M_\mu,s_{\mu,\sigma}\models \exists_r \psi$, therefore
	there exists a model, call it $\M_{\mu.m}$ and a state $s_{\mu.m,\sigma}$ such
	that $(\M_{\mu.m,\sigma},s_{\mu.m,\sigma})$ is a refinement of
	$(\M_\mu,s_{\mu,\sigma})$ and $\M_{\mu.m},s_{\mu.m,\sigma}\models \psi$. Thus, again, the
	two conditions are satisfied. 
\end{proof}

\begin{lemma}\label{lem:sound}
	If there exists a complete accepting tableau branch containing $(1,1)\ \phi$ then $\phi$ is satisfiable.
\end{lemma}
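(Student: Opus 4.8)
The plan is to reverse the construction of Lemma~\ref{lem:complete}: instead of reading a family of models off a satisfying structure, we synthesize a family of models directly from the accepting branch $b$ and show it witnesses satisfiability. For each model-prefix $\mu$ occurring in $b$ we build a Kripke structure $\M_\mu = (W_\mu, R_\mu, V_\mu)$ as follows. The state set $W_\mu$ consists of all state-prefixes $\sigma$ such that $(\mu.\nu,\sigma)\ \psi \in b$ for some $\nu \in \nat^*$ and some formula $\psi$; that is, $\sigma$ is a state of $\M_\mu$ if it is used either in $\M_\mu$ itself or in any of its refinements. We let $s_{\mu,\sigma}\,R_\mu\,s_{\mu,\sigma.i}$ exactly when $\sigma,\sigma.i \in W_\mu$, and we fix a single global valuation $V(\sigma) = \{p : (1,\sigma)\ p \in b\}$ that every $\M_\mu$ inherits on its states. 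The crucial point about this valuation is that, by the $L$ rule, $(\mu,\sigma)\ p \in b$ implies $(1,\sigma)\ p \in b$, so $V$ is consistent with every literal appearing anywhere in the branch.

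The first step is to verify the refinement relations. For prefixes $\mu$ and $\mu.i$ both occurring in $b$, define $\rho_{\mu,\mu.i} = \{(s_{\mu,\sigma}, s_{\mu.i,\sigma}) : \sigma \in W_{\mu.i}\}$ and check that it makes $\M_{\mu.i}$ a refinement of $\M_\mu$. The \emph{atom} condition is immediate because related states carry the same global valuation $V$. The \emph{back} condition is handled by the definition of $W_\mu$: if $s_{\mu.i,\sigma}\,R_{\mu.i}\,s_{\mu.i,\sigma.j}$ then $\sigma.j \in W_{\mu.i}$, hence $(\mu.i.\nu,\sigma.j)\ \psi \in b$ for some $\nu$, so $\sigma.j \in W_\mu$ as well, and $s_{\mu,\sigma.j}$ is the required witness. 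Composing these relations along prefixes, as in condition~3 of Lemma~\ref{lem:complete}, gives the refinement between any $\M_\mu$ and $\M_{\mu.\nu}$.

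The heart of the argument is then the claim that $\M_\mu, s_{\mu,\sigma}\models \psi$ for every $(\mu,\sigma)\ \psi \in b$, proved by induction on the structure of $\psi$. The literal case follows from the definition of $V$ together with the acceptance condition, which forbids $(\mu,\sigma)\ p$ and $(\mu,\sigma)\ \neg p$ from coexisting; the $\wedge$, $\vee$ and $\Diamond$ cases are routine appeals to the corresponding rules and the induction hypothesis. For $\exists_r \psi$ we use the $\exists_r$ rule to locate $(\mu.m,\sigma)\ \psi \in b$, apply the induction hypothesis in $\M_{\mu.m}$, and conclude via the refinement $\rho_{\mu,\mu.m}$ established above, whose member $(s_{\mu,\sigma}, s_{\mu.m,\sigma})$ is exactly the pointed refinement required by the semantics of $\exists_r$.

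I expect the $\Box$ case to be the main obstacle, and it is precisely where the nonstandard $\Box$ rule earns its keep. To show $\M_\mu, s_{\mu,\sigma} \models \Box\psi$ we must check that \emph{every} successor of $s_{\mu,\sigma}$ satisfies $\psi$, including successors $s_{\mu,\sigma.i}$ that were never created directly in $\M_\mu$ but only inherited from a refinement $\M_{\mu.\nu}$ through the definition of $W_\mu$. For each such $\sigma.i \in W_\mu$ there is some $(\mu.\nu,\sigma.i)$ present in $b$, which is exactly the trigger of the $\Box$ rule; since $b$ is complete, the rule has fired and $(\mu,\sigma.i)\ \psi \in b$, so the induction hypothesis gives $\M_\mu, s_{\mu,\sigma.i}\models \psi$. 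This closes the induction. Finally, instantiating the claim at the initial element $(1,1)\ \phi$ yields $\M_1, s_{1,1}\models\phi$, so $\phi$ is satisfiable.
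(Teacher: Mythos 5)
Your proof is correct and follows essentially the same route as the paper's: the same model construction (states $s_{\mu,\sigma}$ drawn from all refinements $\mu.\nu$, transitions along prefix extension, valuation read off the $(1,\sigma)$ entries via the $L$ rule), the same refinement observation between $\M_\mu$ and $\M_{\mu.m}$, and the same structural induction in which the nonstandard $\Box$ rule covers the inherited successors. Your explicit verification of the \emph{atom} and \emph{back} conditions for $\rho_{\mu,\mu.i}$ merely spells out what the paper states as a one-line observation.
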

\begin{proof} 
	
	Suppose we have a complete accepting tableau branch $b$. We define for each
	$\mu$ such that $(\mu,\sigma)\ \psi\in b$ a model $\M_\mu=(W_\mu,R_\mu,V_\mu)$.
	Specifically, we set $W_\mu = \{ s_{\mu,\sigma}\ |\ (\mu.\nu,\sigma) \psi \in
	b\}$ and define $R_\mu$ such that for all $s_{\mu,\sigma},s_{\mu,\sigma.i}\in
	W_\mu$ we have $s_{\mu,\sigma} R_\mu s_{\mu,\sigma.i}$. We also set $p\in
	V_\mu(s_{\mu,\sigma})$ if and only if $(1,\sigma)\ p \in b$.
	
	Observe that for the models we have defined, $\M_{\mu.m}$ is a 
	refinement
	of
	$\M_\mu$, because every transition of the former model can also be found in the
	latter and also, by definition, corresponding states agree on their
	propositional variables.
	
	We will now prove by structural induction on the formula that whenever
	$(\mu,\sigma)\ \psi \in b$ we have $\M_\mu,s_{\mu,\sigma}\models \psi$. For
	propositional formulas and since we apply rule $L$ exhaustively without
	clashes, the assertion is straightforward. If $(\mu,\sigma)\ \Box \psi \in b$
	and the $\Box$ rule has been exhaustively applied, we know that for all
	$(\mu,\sigma.i)$ such that $(\mu.\nu,\sigma.i)$ appears in the tableau
	$(\mu,\sigma.i)\ \psi\in b$. By the inductive hypothesis this means that
	$s_{\mu,\sigma.i}\models \psi$. This covers all successors of $s_{\mu,\sigma}$
	in $\M_\mu$, therefore $\M_\mu,s_{\mu,\sigma}\models \Box \psi$.
	
	If $(\mu,\sigma)\ \Diamond \psi \in b$ and the $\Diamond$ rule has been applied
	then there exists a $(\mu,\sigma.i)\ \psi \in b$. By the inductive hypothesis,
	$\M_\mu,s_{\mu,\sigma.i}\models \psi$ therefore $\M_\mu,s_{\mu,\sigma}\models
	\Diamond \psi$.
	
	Finally, if $(\mu,\sigma)\ \exists_r \psi \in b$ and the $\exists_r$ rule has
	been applied, then there exists a model $\M_{\mu.m}$ and a state
	$s_{\mu.m,\sigma}$ such that, by inductive hypothesis
	$\M_{\mu.m},s_{\mu.m,\sigma}\models \psi$. Since, as observed $\M_{\mu.m}$ is a
	refinement
	of $\M_\mu$, we have
	$\M_{\mu},s_{\mu,\sigma}\models \exists_r \psi$.
\end{proof}

Let us also give a lemma that will prove useful in the analysis of the
algorithm of the following section. We will give an upper bound to the maximum
length of a prefix produced in any tableau branch.

\begin{lemma} \label{lem:musize}
	In any branch $b$ such that $(\mu,\sigma)\ \psi \in b$, we have $|\mu|\le d_\exists(\phi)$ and $|\sigma|\le d_\Diamond(\phi)$.
\end{lemma}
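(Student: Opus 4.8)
The plan is to prove the bound by tracking how each tableau rule affects the lengths of the two prefixes, and showing that the only rules that lengthen $\mu$ consume a nesting level of $\exists_r$, while the only rules that lengthen $\sigma$ consume a nesting level of $\Diamond/\Box$. Inspecting the rules in Table~\ref{tab:tableau}, I observe that among the seven rules, only the $\exists_r$ rule increases $|\mu|$ (replacing $\mu$ by $\mu.m$), and only the $\Diamond$ rule increases $|\sigma|$ (replacing $\sigma$ by $\sigma.i$); all remaining rules ($\wedge$, $\vee$, $L$, $\Box$) either leave both prefixes' lengths unchanged or, in the case of $L$, strictly shorten $\mu$. So the lengths of the prefixes are governed purely by the $\exists_r$ and $\Diamond$ rules.

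The natural strengthening to carry through an induction is to attach to each element $(\mu,\sigma)\ \psi$ appearing in a branch the subformula occurrence of $\phi$ from which $\psi$ was derived, and to prove by induction on the number of rule applications the invariant that whenever $(\mu,\sigma)\ \psi$ is produced, $|\mu|$ plus the number of $\exists_r$ quantifiers that $\psi$ is nested inside (within $\phi$) is at most $d_\exists(\phi)+1$, and symmetrically $|\sigma|$ plus the number of modalities $\psi$ is nested inside is at most $d_\Diamond(\phi)+1$. First I would check the base case: the tableau starts with $(1,1)\ \phi$, where $|\mu|=|\sigma|=1$ and $\phi$ is nested inside zero quantifiers and zero modalities, so both invariants hold with equality since $0+1 \le d_\exists(\phi)+1$. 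Then I would verify the inductive step rule by rule: the $\exists_r$ rule consumes one level of $\exists_r$-nesting (the produced $\psi$ sits strictly deeper than $\exists_r\psi$) precisely while adding $1$ to $|\mu|$, so the quantity $|\mu| + (\text{remaining $\exists_r$-depth of }\psi)$ is preserved; analogously the $\Diamond$ and $\Box$ rules trade one unit of remaining modal depth for one unit of $|\sigma|$ (for $\Diamond$) or preserve it (for $\Box$, where $\sigma.i$ already appeared). The propositional and $L$ rules only decrease the structural complexity without increasing either length, so the invariant is maintained trivially.

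Concluding the bound then follows: since the remaining $\exists_r$-depth of any $\psi$ is at least $0$, the invariant gives $|\mu| \le |\mu| + (\text{remaining $\exists_r$-depth}) \le d_\exists(\phi)+1$, and likewise $|\sigma|\le d_\Diamond(\phi)+1$. A small amount of care is needed to match the exact constant claimed in the statement ($d_\exists(\phi)$ rather than $d_\exists(\phi)+1$), which comes from the fact that the initial prefix has length $1$ but contributes no consumed quantifier; one can absorb this by observing that a prefix of length $k$ requires $k-1$ successful applications of the $\exists_r$ rule along nested subformulas, and a chain of $k-1$ strictly-nested $\exists_r$ occurrences forces $d_\exists(\phi)\ge k-1$, hence $|\mu|=k\le d_\exists(\phi)+1$; the analogous counting for $\sigma$ uses that each length increment of $\sigma$ comes from a $\Diamond$ sitting strictly deeper than the previous, giving $|\sigma|\le d_\Diamond(\phi)+1$.

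The main obstacle I anticipate is the $\Box$ rule, which is the unusual one: it fires on $(\mu,\sigma)\ \Box\psi$ to produce $(\mu,\sigma.i)\ \psi$ \emph{only} when some $(\mu.\nu,\sigma.i)$ has already appeared, so the suffix $i$ extending $\sigma$ was not freshly created by this rule but inherited from an earlier $\Diamond$ application (possibly in a refined model $\mu.\nu$). I must confirm that this does not let $|\sigma|$ grow beyond what a $\Diamond$ chain already accounts for — that is, that every index appended to $\sigma$ ultimately traces back to a single $\Diamond$ rule application, so the $\Box$ rule never \emph{independently} lengthens $\sigma$ beyond the modal depth already paid for. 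Making this traceability precise across the model-refinement indices $\mu.\nu$ is the delicate part and is where I would spend the most care, likely by arguing that the states $\sigma$ of length $k$ in model $\mu$ correspond to states of length $k$ in the refined model $\mu.\nu$, so the modal-depth bookkeeping is uniform across all models in the branch.
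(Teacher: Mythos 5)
Your proposal is correct in substance and takes essentially the same route as the paper, which argues backwards rather than forwards: every element $(\mu.m,\sigma)\ \psi$ can only have been produced by the $\exists_r$ rule from some $(\mu,\sigma)\ \exists_r\psi$ with $d_\exists(\exists_r\psi)>d_\exists(\psi)$, and every $(\mu,\sigma.i)\ \psi$ from some $(\mu,\sigma)\ \Diamond\psi$ or $(\mu,\sigma)\ \Box\psi$; your forward induction with a depth-accounting invariant is the same observation read in the other direction.

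Three remarks. First, your opening formulation of the invariant uses the wrong quantity: ``$|\mu|$ plus the number of $\exists_r$ quantifiers that $\psi$ is nested inside within $\phi$'' fails already for $\phi=\exists_r\exists_r p$, where the element $(1.m.m',1)\ p$ gives $3+2 > d_\exists(\phi)+1 = 3$. The quantity you actually use in the inductive step, $|\mu|$ plus the \emph{remaining} depth $d_\exists(\psi)$, is the right one; state the invariant that way throughout. Second, the obstacle you anticipate around the $\Box$ rule is illusory: for the length bound it is irrelevant where the index $i$ in the side condition originated, because the rule's own premise pays for the extension --- $d_\Diamond(\Box\psi)>d_\Diamond(\psi)$ --- exactly as with $\Diamond$. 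This is why the paper dispatches $\Diamond$ and $\Box$ symmetrically in a single sentence; no traceability argument across the model prefixes $\mu.\nu$ is needed, and your phrase that $\Box$ ``preserves'' length should be corrected: it lengthens $\sigma$ by one while consuming one modality, just like $\Diamond$. Third, the off-by-one you flag is real and is not your error: with the paper's convention that $|1|=1$, the argument (the paper's included) actually yields $|\mu|\le d_\exists(\phi)+1$ and $|\sigma|\le d_\Diamond(\phi)+1$, and $\phi=\exists_r p$ with the element $(1.m,1)\ p$, $|1.m|=2>d_\exists(\phi)=1$, shows the stated constant cannot be met. The discrepancy is harmless where the lemma is used (Theorem~\ref{thm:alg}), since only a polynomial bound on prefix lengths is required there.
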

\begin{proof} 
	It is not hard to see that for all $(\mu.m, \sigma)\ \psi \in b$ there must
	exist $(\mu, \sigma)\ \exists_r \psi \in b$. Since $d_\exists(\exists_r
	\psi)>d_\exists(\psi)$, the first inequality follows. Similarly, for all $(\mu,
	\sigma.i)\ \psi \in b$ there must exist either a $(\mu, \sigma)\ \Diamond \psi
	\in b$ or $(\mu, \sigma)\ \Box \psi \in b$.
\end{proof}

\section{A Satisfiability Algorithm} \label{sec:alg}

In this section we describe an algorithm that decides satisfiability for the
existential fragment of RML. Our algorithm will make heavy use of the tableau
system defined in the previous section and will run in non-deterministic
polynomial space. Since it is known that \NPSPACE=\PSPACE
\cite{savitch1970,papadimitriou2003} and that the satisfiability problem is
\PSPACE-hard even for standard modal logic this algorithm gives a precise
complexity characterization for the satisfiability problem for
RML$^{\exists_r}$

The main idea of the algorithm is similar to the tableau procedure that decides
satisfiability for modal logic. We apply the tableau rules as defined in the
previous section and construct parts of a branch. Notice that we have bounded
the length of prefixes of the tableau formulas by the modal and existential
depths of the formula, so ideally we would want to perform DFS on these
prefixes. However, we need to be careful as there are two kinds of prefixes and
rules $L$ and $\Diamond$ when applied on a formula $(\mu,\sigma)\ \psi$ may
affect the prefixes of $\mu$.

The algorithm consists of the following: we keep a global counter variable (which we call $i$ and initialize to 1), to keep track of the different tableau prefixes we construct. We call the procedure RMLexSAT, which is described in Figure \ref{alg:rmlex}, with parameters $\{(1,1,\phi)\}$, $1$, $1$ and $\emptyset$. If the procedure returns, we declare the formula $\phi$ satisfiable. If at any point the procedure encounters a clash and rejects, the algorithm declares the formula unsatisfiable.

\begin{procedure}
	\KwIn{$P$, $\mu$, $\sigma$, $M\subseteq P$}
	\While{$P$ is not closed under the $\land,\lor$ and $L$ rules}{
	\ForEach{$(\nu,\sigma,\psi_1\wedge \psi_2)\in P \setminus M$}{
		$P:=P\cup \{(\nu,\sigma,\psi_1),(\nu,\sigma, \psi_2)\}$\;
		$M:=M\cup \{(\nu,\sigma,\psi_1\wedge \psi_2)\}$
		}
		\ForEach{$(\nu,\sigma,\psi_1\lor \psi_2)\in P\setminus M$}{
			Non-deterministically set $P:=P\cup \{(\nu,\sigma,\psi_1)\}$ or $P:=P\cup \{(\nu,\sigma,\psi_2)\}$\;
			$M:=M\cup \{(\nu,\sigma,\psi_1\lor \psi_2)\}$
		}
		\ForEach{$(\nu,\sigma,l)\in P\setminus M$, where  $l$ literal}{
			$P:=P\cup \{(\lambda,\sigma,l)\ |\ \lambda\sqsubseteq \nu\}$\;
			$M:=M\cup \{(\nu,\sigma,l)\}$
		}
	}
	\ForEach{$(\nu,\sigma,\Diamond \psi)\in P\setminus M$}{
		$P':=\{ (\lambda,\sigma.i,\chi)\ |\ (\lambda,\sigma,\Box \chi)\in P \textrm{ and } \lambda \sqsubseteq \nu   \} \cup \{(\nu,\sigma.i,\psi)\}$\;
		i:=i+1\;
		\nlset{A} \label{alg:firstcall}
		RMLexSAT($P'$,$\nu$,$\sigma.(i-1)$,$\emptyset$)\;
		$M:=M\cup\{(\nu,\sigma,\Diamond \psi)\}$
		}
	$N:=\{(\nu,\sigma,\exists_r \psi)\in P\setminus M\}$\;
	$M:=M\cup N$\;
	\ForEach{$(\nu,\sigma,\exists_r \psi)\in N$}{
		$P':=\{ (\lambda,\sigma,\chi)\in P\ |\ \lambda \sqsubseteq \nu   \} \cup \{(\nu.i,\sigma,\psi)\}$\;
		i:=i+1\;
		\nlset{B} \label{alg:seccall}
		$P'=$RMLexSAT($P'$,$\nu.(i-1)$,$\sigma$,$M$)\;
		$P := P \cup \{(\lambda,\sigma,l)\in P'| \lambda \sqsubseteq \nu, l $ literal$ \}$
	}
	\lIf{$P$ contains a clash}{Reject}			
		Return $P$
		\caption{RMLexSAT($P$, $\mu$, $\sigma$, $M$)} \label{alg:rmlex}
	\end{procedure}

\begin{theorem} \label{thm:alg}
	The satisfiability problem for RML$^{\exists_r}$ is in \PSPACE.
\end{theorem}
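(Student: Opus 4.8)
The plan is to prove membership in \PSPACE\ in two stages. First I would argue that the nondeterministic procedure RMLexSAT is a faithful, memory-efficient depth-first realization of the tableau system of Section \ref{sec:tableaux}: there is a sequence of nondeterministic choices under which the initial call on $\{(1,1,\phi)\}$ returns without ever executing \texttt{Reject} if and only if there is a complete accepting branch starting from $(1,1)\ \phi$. Combined with Lemmas \ref{lem:complete} and \ref{lem:sound}, this yields that the algorithm has an accepting computation exactly when $\phi$ is satisfiable. Second, I would show that every computation uses only polynomially many bits; since the procedure is nondeterministic this places the problem in \NPSPACE, and since \NPSPACE=\PSPACE\ the theorem follows.

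For correctness, the idea is to read off a tableau branch from the union of the sets $P$ produced over a non-rejecting computation. The \texttt{while} loop closes the current $P$ under the $\wedge$, $\vee$ and $L$ rules, the nondeterministic choice on $\vee$ matching the branch choices of the tableau. The two \texttt{foreach} loops spawn the recursive calls that build the successor-state and refined-model parts of the branch: processing $(\nu,\sigma,\Diamond\psi)$ creates the state $\sigma.i$ and hands down $(\nu,\sigma.i,\psi)$ together with every $(\lambda,\sigma.i,\chi)$ arising from a pending $\Box\chi$ with $\lambda\sqsubseteq\nu$, while processing $(\nu,\sigma,\exists_r\psi)$ creates the model $\nu.i$ and hands down $\psi$ together with all $(\lambda,\sigma,\chi)$ with $\lambda\sqsubseteq\nu$ that the refinement must respect. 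The two points I would verify carefully are: (i) that gathering the $\Box$-obligations at the moment a successor is materialized correctly implements the $\Box$ rule, whose side condition fires precisely when $(\mu.\nu,\sigma.i)$ appears; and (ii) that the literals returned by an $\exists_r$-subcall and re-added to $P$ (the elements $(\lambda,\sigma,l)$ with $\lambda\sqsubseteq\nu$) propagate, via the $L$ rule, exactly those literals forced in the refined model back to the parent, so that any clash arising anywhere in the full branch surfaces in the set $P$ of some call and triggers \texttt{Reject}.

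For the space bound I would treat the per-frame size and the recursion depth separately. By construction all elements of any $P$ passed to a call share a single state-prefix $\sigma$, their first coordinates are prefixes of one string (so by Lemma \ref{lem:musize} there are at most $d_\exists(\phi)+1$ distinct ones), and their formulas are subformulas of $\phi$; hence $|P|\le (d_\exists(\phi)+1)\cdot|\phi|$, and each prefix is stored in polynomially many bits, so a single frame is polynomial. For the depth I would use Lemma \ref{lem:musize} together with the observation that the parameter $M$ is \emph{passed} to $\exists_r$-subcalls but \emph{reset} to $\emptyset$ for $\Diamond$-subcalls. Along any root-to-leaf path each $\Diamond$-subcall strictly increases $|\sigma|$, so there are at most $d_\Diamond(\phi)$ of them; and because every $\exists_r$-obligation already present when an $\exists_r$-subcall is entered is marked in $M$ and so never reprocessed, the only fresh $\exists_r$-obligations in such a subcall sit at the current maximal model-prefix, so $|\mu|$ strictly increases along any maximal run of consecutive $\exists_r$-subcalls and each run has length $O(d_\exists(\phi))$. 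Hence the recursion depth is $O(d_\Diamond(\phi)\cdot d_\exists(\phi))$; with polynomial frames this gives polynomial total space.

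I expect the main obstacle to be the correctness argument of the second paragraph rather than the counting. Because the full tableau is exponentially large we cannot keep it in memory, so the proof must establish that the depth-first reconstruction of each node's literal- and $\Box$-obligations, together with the upward propagation of literals out of $\exists_r$-subcalls through the return value, detects exactly the clashes of the full tableau---neither missing a clash nor rejecting a branch the tableau would accept. Showing that the interaction between the two prefix dimensions is handled correctly under bounded memory, and in particular that the $M$-driven marking never suppresses an obligation that the tableau would still enforce, is where the bulk of the work lies.
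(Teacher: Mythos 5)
Your proposal is correct and follows essentially the same route as the paper's proof: the same recursion-depth bound of $O(d_\Diamond(\phi)\cdot d_\exists(\phi))$ obtained by noting that $\Diamond$-subcalls increase $|\sigma|$ while the $M$-marking forces $|\mu|$ to grow along consecutive $\exists_r$-subcalls, the same polynomial frame-size bound, and the same correctness argument reconstructing a complete accepting branch from the union of the sets $P$ (with clashes surfacing in a common ancestor frame via the upward propagation of literals), concluding via Lemmas \ref{lem:complete} and \ref{lem:sound} and \NPSPACE=\PSPACE. The only detail the paper makes explicit that you gloss over is the bound on the global counter $i$ (at most the total number of recursive calls, hence polynomially many bits), but this follows from the depth and branching bounds you already establish.
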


\begin{proof}
Let us first argue why the algorithm we described runs in non-deterministic polynomial space. 	First, we bound the recursion depth of the procedure by $d_\Diamond(\phi) \cdot d_\exists(\phi)$. 
 Assume RMLexSAT($P'$, $\mu'$, $\sigma'$, $M'$) was called inside the call of RMLexSAT($P$, $\mu$, $\sigma$, $M$). If that was the call in line \ref{alg:firstcall}, then notice that $|\sigma'|>|\sigma|$. Thus, in a recursion path, the number of recursive calls from line \ref{alg:firstcall} is bounded by $d_{\Diamond}(\phi)$(see lemma \ref{lem:musize}); it is enough to bound the depth of consecutive calls of the procedure from line \ref{alg:seccall} by $d_{\exists}(\phi)$. Consider such a subpath of consecutive recursive  calls from line \ref{alg:seccall}. Notice that as we go through this path only elements of $P\setminus M$ that are not of the form $(\nu,\tau,\Box \psi)$ contribute to the procedure. But after the first such call, say of RMLexSAT($P$, $\mu$, $\sigma$, $M$), any such element must be of the form $(\mu,\sigma,\psi)$. Thus, the next call along this this subpath, will further increase the length of $\mu$ and so on, so the subpath's depth is at most $|d_{\exists}(\phi)|$ (again, see lemma \ref{lem:musize}).
 
The elements in $P$ are at most $d_{\exists}(\phi) \cdot |\phi|$: the number of
subformulas of $\phi$ is at most $|\phi|$ and there are at most $|\mu|\le
d_\exists(\phi)$ prefixes of $\mu$. Thus, the branching of the recursion is at
most $|P|$ and the global variable $i$ is bounded by the total number of
recursive calls which in turn is at most $(d_\Diamond(\phi) \cdot
d_\exists(\phi))^{d_{\exists}(\phi) \cdot |\phi|}$. Therefore $i$ can be stored
using at most $d_{\exists}(\phi) \cdot |\phi| \cdot \log(d_\Diamond(\phi) \cdot
d_\exists(\phi))$ bits and this is enough to argue that our algorithm uses
space polynomial with respect to the size of its input.
 
 It is not hard to see that if there is an accepting tableau branch for $\phi$, then the algorithm can make appropriate non-deterministic choices to keep $P$ a set of triples corresponding to a subset of the branch and thus never run into a clash and accept. On the other hand, if there are appropriate non-deterministic choices for the algorithm to accept $\phi$, let $B$ be the set of all the triples ever produced during the run of the algorithm and let $b = \{(\mu,\sigma)\ \psi | (\mu,\sigma, \psi) \in B \}$. Notice that for every formula in $b$ and rule that can be applied on the formula, at some point the algorithm must have applied the rule on the formula, thus $b$ is a complete branch. It is also an accepting branch: if there is a clash, this means there are $(1,\sigma)\ p$ and $(1,\sigma)\ \neg p \in b$, thus at some point the algorithm produces $(1,\sigma, p)$ and at some other point $(1,\sigma, \neg p)$. Even if these points were during different recursive calls, notice that these triplets would transfer to the first recursive call with $\sigma$ as part of the input, which is a common ancestor in the recursion tree, due to the global variable $i$ which changes with every call.
\end{proof}

\paragraph{Complexity for fixed number of alternations.} Let us now sketch why
the above procedure can be used to close  the gaps in the complexity results  of \cite{bozzelli2012b}. Intuitively,
the idea is that since we can now solve satisfiability for RML$^{\exists_r}$ in
polynomial space (and hence deterministic exponential time) we save one
alternation for all the more complex cases.

The algorithm given in \cite{bozzelli2012b} is recursive. Given a formula
$\phi$, it (non-deterministically) guesses a model for $\phi$ of at most
exponential size. Then it checks whether the model actually satisfies $\phi$.
The algorithm proceeds to check if subformulas of $\phi$ hold in this model.
When $\forall_r \psi$ is encountered, the algorithm queries an oracle for the
truth of $\exists_r \neg \psi$ in the current model and if the answer is yes,
it rejects.  The resulting complexity comes from the fact that the alternation
depth of $\exists_r \neg \psi$ as defined in \cite{bozzelli2012b} is one less
than the one of $\forall_r \psi$.

The observation now is that we can save the last query access for $\exists_r
\psi$ if $\psi \in$ RML$^{\exists_r}$.  First notice that we can use our
algorithm to construct all complete accepting branches of the tableau in
exponential time.  We further argue that for any model for $\psi$, one of these
branches corresponds to a refinement of that model: notice that the proof of lemma \ref{lem:complete} works for any model for $\phi$ and the model constructed in the proof of lemma \ref{lem:sound} is a refinement of the original model for $\phi$ through the obvious refinement relation. The rest follows from the fact that refinement is a transitive relation.

\section{Model Checking} \label{sec:model}

In this section we deal with the complexity of the model checking problem for
RML, which was posed as an open question in \cite{bozzelli2012}. The main
(simple) observation we need is that, if we ignore propositional variables the
problem contains the satisfiability problem for \k. Since the satisfiability
problem for \k\ is \PSPACE-hard even for the fragment that contains no
variables \cite{ChagrovR02}, we get the same complexity for model checking for
RML and RML$^{\exists_r}$.

\begin{theorem}

Given an RML formula $\phi$, a model $\M$ and a state $s$ it is \PSPACE-hard to
decide if $\M,s \models \phi$ even if $\phi$ contains only a single refinement
quantifier. 

\end{theorem}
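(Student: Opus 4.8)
The plan is to reduce from the satisfiability problem for the variable-free fragment of \k, which is \PSPACE-hard \cite{ChagrovR02}. Given a variable-free modal formula $\chi$ (being an ordinary \k\ formula, it contains no refinement quantifiers), I would output the instance consisting of the formula $\phi = \exists_r \chi$, the model $\M = (\{s\},\{(s,s)\},V)$ with $V(s)=\emptyset$ (a single reflexive world carrying the empty valuation), and the state $s$. Plainly $\phi$ uses exactly one refinement quantifier, and the reduction is computable in linear time, so it suffices to argue that $\M,s\models\phi$ if and only if $\chi$ is satisfiable.

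The heart of the argument is the observation that the reflexive point $\M$ refines to \emph{every} Kripke model. Concretely, given any pointed model $(\M'',s'')$ with $\M''=(W'',R'',V'')$, I would set $\mathcal{R}=\{s\}\times W''$. Since the truth of the variable-free formula $\chi$ is insensitive to $V''$, we may overwrite $V''$ with the empty valuation, so the atom condition $V(s)=V''(w'')$ holds for every $(s,w'')\in\mathcal{R}$. For the back condition, any transition $s''R''t''$ must be matched by a transition out of $s$ in $\M$ landing in an $\mathcal{R}$-related world; taking $t=s$ works, because $s$ is $\mathcal{R}$-related to every world of $W''$ and $sRs$ holds by reflexivity. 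Hence $\mathcal{R}$ is a refinement mapping with $(s,s'')\in\mathcal{R}$, so $(\M'',s'')$ is a refinement of $(\M,s)$.

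With this in hand both directions are immediate. If $\chi$ is satisfiable, pick a pointed model $(\M'',s'')$ with $\M'',s''\models\chi$; by the previous paragraph it is a refinement of $(\M,s)$, so $\M,s\models\exists_r\chi$ by the semantics of the refinement quantifier. Conversely, if $\M,s\models\exists_r\chi$ then some refinement $(\M',s')$ of $(\M,s)$ satisfies $\M',s'\models\chi$, which already witnesses that $\chi$ is satisfiable. This establishes the equivalence, and hence the \PSPACE-hardness, for the single-quantifier formula $\exists_r\chi$.

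The only genuinely delicate point is the refinement claim of the second paragraph: one must check that the purely \emph{back}-directed refinement condition, combined with the reflexive loop at $s$, is enough to let $\M$ simulate arbitrary transition structures. I expect this to be the main conceptual step, whereas the two directions of the equivalence and the bound on the number of quantifiers are routine. Note also that the construction never uses propositional variables, matching the variable-free regime in which \k\ satisfiability is already hard, and it applies verbatim to RML$^{\exists_r}$ since $\phi$ lies in that fragment.
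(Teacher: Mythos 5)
Your proposal is correct and follows essentially the same route as the paper's own proof: a reduction from variable-free \k\ satisfiability using the single reflexive-state model and the formula $\exists_r \chi$, justified by the observation that this model refines to every (variable-free) pointed model via the relation pairing $s$ with all worlds. Your explicit verification of the \emph{atom} and \emph{back} conditions merely spells out what the paper dismisses as ``not hard to see,'' so there is no substantive difference.
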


\begin{proof}

Suppose that we have a formula $\psi$ in the standard modal logic \k\ and that
$\psi$ contains no variables.  We will construct $\M,s$ and $\phi$ such that
$\M,s\models \phi$ if and only if $\psi$ is satisfiable.  Let $\M$ be the model
that consists of a single state $s$ and a transition from that state to itself.
It is not hard to see that any (variable-less) model is a refinement of $\M$ by
mapping all states to $s$. We let $\phi$ be $\exists_r \psi$. Now $\M,s \models
\phi$ if and only if there exists some model such that $\psi$ holds in one of
its states.
\end{proof}

For the sake of completeness, we will now aso demonstrate that model checking
for RML$^{\exists_r}$ is in {\PSPACE}. To do this
we use a variation of the tableau rules for
RML$^{\exists_r}$-satisfiability that we introduced above. Given a
model $\M$, state $u$ and formula $\phi$, we 
run the tableau procedure as was described in
section \ref{sec:tableaux} with the following restrictions: 
We say that $1$ is mapped to $u$ and 
whenever a prefix $(\mu,\nu.n)$ is introduced, $\nu$ is mapped
to $v$ and $\nu.n$ is new,
$\nu.n$ is mapped to some state accessible from $v$.
Whenever a prefixed formula $(\mu,\nu)\ l$ is introduced to
the branch, where $l$ a literal and $\nu$ is mapped to $v$, it must be
the case that $\M,v \models l$. If at some point it is impossible to
satisfy any of these restrictions, the branch is rejecting.
Finally, we have the additional rule:\\
	\begin{minipage}{0.20\textwidth}
		\[
		\inferrule*[right=$\Box_1$]{(1,\sigma)\ \Box \psi}{(1,\sigma.i_a)\ \psi}
		\]
	\end{minipage}
	\begin{minipage}{0.70\textwidth}
		where $\sigma.i_a$ is new and mapped to $a$, where $\sigma$ maps to $v$, $v R a$ and there is no $\sigma.j$  that maps to $a$.
	\end{minipage}
	
	The proof that the tableau procedure is correct and can be carried on using polynomial space is similar to the corresponding proofs in the previous sections. In particular, from an accepting branch it is not hard to see that we can construct a chain of refinements of $(\M,u)$ in a similar way as in lemma \ref{lem:complete} and using the extra conditions, such that $\M = \M_1$ and every $\sigma$ is mapped to $s_{1,\sigma}$ Then by induction on the formula we can show that $\M,u \models \phi$. On the other hand, if $\M,u \models \phi$, then it is straightforward to follow the proof of lemma \ref{lem:sound} to demonstrate that there is an accepting branch. Polynomial space follows from the fact that the number of states in the model is less than the size of the model, which is part of the input, thus it does not significantly affect the efficiency of the procedure. Therefore, we conclude:
	
	\begin{theorem}
		
		The model checking problem for RML$^{\exists_r}$ is \PSPACE-complete.
		
	\end{theorem}

\bibliographystyle{abbrv}
\bibliography{refinement}

\end{document}